\newtheorem{proposition}{Proposition}
\newtheorem{theorem}{Theorem}
\def\Xint#1{\mathchoice
{\XXint\displaystyle\textstyle{#1}}%
{\XXint\textstyle\scriptstyle{#1}}%
{\XXint\scriptstyle\scriptscriptstyle{#1}}%
{\XXint\scriptscriptstyle\scriptscriptstyle{#1}}%
\!\int}
\def\XXint#1#2#3{{\setbox0=\hbox{$#1{#2#3}{\int}$ }
\vcenter{\hbox{$#2#3$ }}\kern-.57\wd0}}
\def\dashint{\Xint-}
\title{The Benjamin-Ono Hierarchy with Asymptotically Reflectionless Initial Data in the Zero-Dispersion Limit}
\author[P. D. Miller]{Peter D. Miller}
\address{Department of Mathematics\\University of Michigan\\East Hall\\
530 Church St.\\
Ann Arbor, MI 48109} 
\email[P. D. Miller]{millerpd@umich.edu}
\urladdr[P. D. Miller]{http://www.math.lsa.umich.edu/~millerpd}
\thanks{This work was supported by the National Science Foundation under grant DMS-0807653.}
\author[Z. Xu]{Zhengjie Xu}
\email[Z. Xu]{zhengjxu@umich.edu}
\curraddr[Z. Xu]{Bloomberg L. P.}
\date{\today}
\begin{document}
\maketitle
\begin{abstract}
We study the Benjamin-Ono hierarchy with positive initial data of a general type, in the limit when the
dispersion parameter tends to zero.  We establish simple formulae for the limits (in appropriate weak or distributional senses) of an infinite family of simultaneously conserved densities in terms of alternating sums of branches of solutions of the inviscid Burgers hierarchy.
\end{abstract}

\section{Introduction}
The Benjamin-Ono equation is a nonlinear evolution equation governing
certain types of 
\emph{internal waves}.  Internal waves are
disturbances --- set into motion by gravity --- of the
interface between two immiscible fluids of different densities. 
A number of assumptions are employed to deduce the Benjamin-Ono equation
as a simplified model 
from the full equations of three-dimensional fluid mechanics:  
\begin{itemize}
\item One supposes that 
both fluid layers consist of inviscid and incompressible fluids,
  and (for stability) the less-dense fluid rests on top of the denser
  fluid.
\item 
One supposes that the waves are propagating in one direction only, which
reduces the problem to that of two-dimensional fluid mechanics (only the
vertical direction and the propagation direction survive).
\item
One supposes that the top layer containing the less-dense fluid is thin 
compared to a typical wavelength of the interface.  This allows the 
top layer to be treated by a depth-averaging approach.  The top of this
layer is idealized to a rigid horizontal lid.
\item 
One supposes that the bottom layer containing the denser fluid is infinitely
thick.  This simplifies the contribution from this layer to 
the dispersion relation of small-amplitude waves on the interface in the
linear approximation.
\item
One supposes that the amplitude of the waves is small compared to the
thickness of the top layer.  This allows the nonlinear effects to be brought in
perturbatively, and also means we are assuming the interface does not
breach the surface by meeting the rigid lid.
\end{itemize}
In the linear and dispersionless approximation, deformations of the
interface satisfy the one-dimensional wave equation.  If one
perturbatively introduces the balanced effects of weak nonlinearity
and dispersion on a solution of the wave equation propagating to the
right (say) at constant velocity, then the Benjamin-Ono equation
arises as the first correction in the moving frame of reference of the
wave, describing the slow variation of the dimensionless wave height
$u$ as a function of a spatial coordinate $x$ in the propagation direction
and the time $t$:
\begin{equation}
u_t +2uu_x+\epsilon\mathcal{H}[u_{xx}]=0.
\label{eq:BO}
\end{equation}
Here, subscripts denote partial derivatives, and $\mathcal{H}$ is the
Hilbert transform operator defined on $L^2(\mathbb{R})$ by the singular integral
\begin{equation}
\mathcal{H}[f](x):=\frac{1}{\pi}\dashint_\mathbb{R}\frac{f(y)\,dy}{y-x}, 
\end{equation}
and $\epsilon>0$ is a dimensionless
measure of the relative strength of dispersive effects compared with
nonlinear effects.  If an initial condition $u=u_0(x)$ is given at $t=0$ where $u_0$ is independent
of $\epsilon$, then of course the corresponding solution of \eqref{eq:BO} will depend
on $\epsilon$, but we will often not be explicit about this dependence in our notation.  

It is quite useful in applications to have accurate and easily analyzable
models for internal waves.  Indeed, one application that is particularly
timely is the modeling of submerged ``plumes'' of oil as were reported
following the Deepwater Horizon leak in the Gulf of Mexico in May--August 2010.  Recent
experiments \cite{CamassaM10} performed by Roberto Camassa and Richard McLaughlin at the
University of North Carolina, Chapel Hill have demonstrated that if oil
is emitted as a turbulent jet from an ocean floor leak and a density
stratification is present in the surrounding fluid, then most of the oil
will become trapped at the interface between the dense and less-dense fluid
layers rather than floating to the surface, \emph{even though the oil
is less dense still than the upper layer}.  
In these circumstances, modeling the motion of the submerged oil plumes
within the fluid column amounts to modeling the motion of the density
interface, that is, modeling internal waves.  

The purpose of this short paper is to show how methods we have recently developed \cite{MillerX10} to study the
asymptotic behavior of solutions of the Benjamin-Ono equation as $\epsilon\downarrow 0$ in
the weak topology extend both to the whole hierarchy of  ``higher-order'' Benjamin-Ono equations and also to the whole hierarchy of densities coming from conservation laws.  It is a pleasure to be able
to contribute to this special volume of papers in honor of Dave Levermore.  It will be clear to all of those
who have followed his work that our small contribution is directly inspired by his groundbreaking analysis with Peter Lax of the zero-dispersion limit for the Korteweg-de Vries equation \cite{LaxL83}, a project that
has had a tremendous impact on fields of study ranging from asymptotic analysis of nonlinear evolution equations to the theory of orthogonal polynomials and of random matrices.  

\section{The Benjamin-Ono Hierarchy}
Let us take the phase space of fields $u$ to be
\begin{equation}
\mathscr{P}:=\{u\in L^2(\mathbb{R})\cap C^\infty(\mathbb{R}),\; u^{(k)}\in L^2(\mathbb{R}),\;\forall k\}.
\end{equation}
This is clearly a linear space over $\mathbb{R}$, and it is also an algebra that is closed under differentiation and Hilbert transforms.  We will have use below for the Cauchy operators $\mathcal{C}_\pm$ densely defined on $L^2(\mathbb{R})$ by singular integrals as follows:
\begin{equation}
\mathcal{C}_\pm[f](x):=\lim_{\delta\downarrow 0}\frac{1}{2\pi i}\int_\mathbb{R}\frac{f(y)\,dy}{y-x\mp i\delta}.
\end{equation}
The Cauchy operators are bounded with respect to the $L^2(\mathbb{R})$ norm and hence extend to bounded operators with the same norms on all of $L^2(\mathbb{R})$.  In fact, the operators $\mathcal{C}_+$ and $-\mathcal{C}_-$ are just the self-adjoint orthogonal projections from $L^2(\mathbb{R})$ onto
the Hardy subspaces of functions analytic in the upper and lower half-planes, respectively.  They satisfy the identities
\[
\mathcal{C}_+\circ(-\mathcal{C}_-)=(-\mathcal{C}_-)\circ\mathcal{C}_+=0,\quad
\mathcal{C}_+^2=\mathcal{C}_+,\quad (-\mathcal{C}_-)^2=-\mathcal{C}_-,\quad
\mathcal{C}_+-\mathcal{C}_-=1,\quad \mathcal{C}_++\mathcal{C}_-=-i\mathcal{H}.
\]
Since $\mathcal{C}_\pm$ are self-adjoint, it follows that $\mathcal{H}$ is skew-adjoint, that is,
\begin{equation}
\int_\mathbb{R}f\mathcal{H}[g]\,dx = -\int_\mathbb{R}g\mathcal{H}[f]\,dx
\end{equation}
whenever $f$ and $g$ area real-valued functions in the phase space $\mathscr{P}$.
The operators $\mathcal{C}_\pm$ and $\mathcal{H}$ commute with differentiation in $x$.
Also, for functionals $I[u]$ defined on the phase space $\mathscr{P}$, we define the variational
derivative $\delta I/\delta u$ by 
\begin{equation}
\left.\frac{d}{dt}I[u+ tv]\right|_{t=0}=\int_\mathbb{R}\frac{\delta I}{\delta u}[u](x)v(x)\,dx.
\end{equation}
For functionals $I$ with $I[0]=0$, assuming the existence of the variational derivative of $I$ for each $u\in\mathscr{P}$ we can recover
the functional from its derivative by the formula
\begin{equation}
I[u] = \int_0^1\int_\mathbb{R}\frac{\delta I}{\delta u}[tu](x)u(x)\,dx\,dt = 
\int_\mathbb{R}\left(\int_0^1\frac{\delta I}{\delta u}[tu](x)\,dt\right) u(x)\,dx.
\label{eq:integratebackup}
\end{equation}

\subsection{Conservation Laws for the Benjamin-Ono Equation}
The Benjamin-Ono equation conserves an infinite number of functionals of $u$.  These may be obtained by several different methods, several of which we review (in historical order of discovery) for the reader's convenience.
\subsubsection{The Nakamura Scheme}
A.~Nakamura \cite{Nakamura79} (see also \cite{Matsuno}) was the first to deduce an infinite number of conserved quantities for the Benjamin-Ono equation \eqref{eq:BO}.  His derivation is  based on a 
B\"acklund
transformation for \eqref{eq:BO}.  The B\"acklund transformation is
\begin{equation}
-i\epsilon\mu\mathcal{C}_+[n_x] + 1-e^{-n} = \mu u
\label{eq:Baecklund}
\end{equation}
where $\mu$ is an arbitrary parameter.  From this equation it can be shown that regardless of the value of $\mu$, 
\begin{equation}
\frac{d}{dt}\int_\mathbb{R}n\,dx = 0
\end{equation}
when $u$ satisfies \eqref{eq:BO}.
Therefore, by expanding $n$ in a power series in $\mu$ with coefficients depending on $u$, the coefficients
will all be densities of conserved functionals of $u$.  Writing $n=\mu n_1+\mu^2n_2+\mu^3n_3+\cdots$ one easily obtains a recurrence in which the conserved density $n_n$ is explicitly given in terms of 
$n_1,n_2,\dots,n_{n-1}$ and $u$.  In fact, the recurrence can be made more explicit by noting that the nonlinearity of the scheme is only quadratic in $n_x$:  differentiating \eqref{eq:Baecklund} with respect to $x$ and then using \eqref{eq:Baecklund} to eliminate $e^{-n}$ from the result one arrives at
\begin{equation}
n_x =\mu\left(un_x + u_x+i\epsilon n_x\mathcal{C}_+[n_x]+i\epsilon\mathcal{C}_+[n_{xx}]\right),
\end{equation}
and therefore $n_{1,x}=u_x$ and 
\begin{equation}
n_{m,x}=un_{m-1,x} + i\epsilon\mathcal{C}_+[n_{m-1,xx}] + i\epsilon\sum_{j=1}^{m-2} n_{j,x}\mathcal{C}_+[n_{m-1-j,x}],\quad m=2,3,4,\dots.
\end{equation}
The first several densities are given by:
\begin{equation}
\begin{split}
n_1 &=u\\
n_2 &= \frac{1}{2}u^2 +\epsilon\left( \frac{1}{2}\mathcal{H}[u_x] +\frac{1}{2} iu_x\right)\\
n_3&= \frac{1}{3}u^3 +\epsilon\left(\frac{1}{2} u\mathcal{H}[u_x] + i uu_x +\frac{1}{2}\mathcal{H}[uu_x]
\right) +\epsilon^2\left(\frac{1}{2}i\mathcal{H}[u_{xx}]-\frac{1}{2} u_{xx}
\right) \\
n_4 &=\frac{1}{4}u^4 +\epsilon\left(\frac{1}{2}u^2\mathcal{H}[u_x]+\frac{1}{2}u\mathcal{H}[uu_x]
\right)-\epsilon^2\frac{1}{2}uu_{xx} \\
&\qquad{}+\left[\epsilon\left(\frac{1}{2}iu^3+\frac{1}{6}\mathcal{H}[u^3]\right)+\epsilon^2
\left(\frac{3}{4}iu\mathcal{H}[u_x]-\frac{3}{4}uu_x+\frac{1}{4}\mathcal{H}[u\mathcal{H}[u_x]]\right)\right]_x\\
&\quad\qquad{}
-\epsilon^2\frac{1}{4}u_x\mathcal{H}[u_x]+\epsilon^2\frac{1}{8}\left(\mathcal{H}[u_x]^2-(u_x)^2\right])
\\
&\qquad\qquad{}+\epsilon^2
\frac{3}{4}i\mathcal{H}\left[(u_x)^2+uu_{xx}\right].
\end{split}
\end{equation}
In the expression for $n_4$, only the terms on the first line contribute to the integral over $\mathbb{R}$.
Indeed, those on the second line are derivatives of functions in $\mathscr{P}$, and those on the third line
have zero integral because $\mathcal{H}$ is skew-adjoint and $\mathcal{H}^2=-1$.  To deduce that
the terms on the fourth line have zero integral, it is necessary to correctly interpret the integral as the integrand is not of class $L^1(\mathbb{R})$.  The key identity here is the following:  if $f\in\mathscr{P}$
then
\begin{equation}
\lim_{R\uparrow\infty}\int_{-R}^R\mathcal{H}[f](x)\,dx = 0.
\label{eq:intHzero}
\end{equation}

\subsubsection{The Fokas-Fuchssteiner Scheme}
The following method is due to Fokas and Fuchssteiner \cite{FokasF81}.  It is based on Lie-theoretic analysis of one-parameter symmetry groups of the Benjamin-Ono equation \eqref{eq:BO}.  
One begins with 
\begin{equation}
\frac{\delta I_1}{\delta u}[u] = 1\quad\text{and}\quad\frac{\delta I_2}{\delta u}[u]=u
\end{equation}
and then recursively defines
\begin{equation}
\frac{\delta I_m}{\delta u}[u]=\frac{1}{m-2}\frac{\delta}{\delta u}\int_{\mathbb{R}}
\left[2xuu_x+u^2 +
\epsilon\left(x\mathcal{H}\left[u_{xx}\right]+\frac{3}{2}
\mathcal{H}\left[u_x\right]\right)\right]\frac{\delta I_{m-1}}{\delta u}[u]\,dx,\quad m=3,4,5,\dots.
\label{eq:FFrecursion}
\end{equation}
To compute the variational derivative on the right-hand side one needs the identity
\begin{equation}
\mathcal{H}[xf]=x\mathcal{H}[f]\quad\text{if}\quad \int_\mathbb{R}f(x)\,dx=0.
\end{equation}
Although the explicit function $x$ appears in the integrand on the right-hand side of \eqref{eq:FFrecursion}, the recursion guarantees that the
variational derivatives produced are all generated from $u$ and its derivatives and a finite number of
applications of $\mathcal{H}$.  
For example, 
\begin{equation}
\frac{\delta I_3}{\delta u}[u]=\frac{\delta}{\delta u}\int_\mathbb{R}\left[2xu^2u_x+u^3 +
\epsilon\left(xu\mathcal{H}[u_{xx}]+\frac{3}{2}u\mathcal{H}[u_x]\right)\right]\,dx=u^2+\epsilon\mathcal{H}[u_x]
\end{equation}
and
\begin{equation}
\frac{\delta I_4}{\delta u}[u] = u^3+\epsilon\left(\frac{3}{2}u\mathcal{H}[u_x]+\frac{3}{2}\mathcal{H}[uu_x]\right)-\epsilon^2u_{xx}.
\end{equation}
As $x$ does not appear explicitly in the resulting expressions, it follows from the formula
\eqref{eq:integratebackup} that the functionals $I_m$ are integrals of densities that also do not
involve $x$ explicitly.  These densities may be easily obtained from the variational derivatives simply
by first multiplying through term-by-term by $u$ and then dividing each term by its
homogeneous degree in $u$.  For example, from each $\delta I_m/\delta u$ we obtain a corresponding
conserved density $f_m$ as follows:
\begin{equation}
\begin{split}
f_1&=u\\
f_2 &=\frac{1}{2}u^2\\
f_3 &=\frac{1}{3}u^3 + \epsilon\frac{1}{2}u\mathcal{H}[u_x]\\
f_4&=\frac{1}{4}u^4 + \epsilon\left(\frac{1}{2}u^2\mathcal{H}[u_x]+\frac{1}{2}u\mathcal{H}[uu_x]\right)
-\epsilon^2\frac{1}{2}uu_{xx}.
\end{split}
\end{equation}
Unlike the densities produced by the Nakamura scheme, these densities are all absolutely integrable if $u\in\mathscr{P}$.

\subsubsection{The Kaup-Matsuno Scheme}
The following method was derived from the inverse-scattering transform for the Benjamin-Ono equation by  Kaup and Matsuno \cite{KaupM98}.  Set $k_1:=u$ and then define recursively
\begin{equation}
k_{m}:=u\mathcal{C}_+[k_{m-1}]+i\epsilon\left(\frac{k_{m-1}}{u}\right)_x,\quad m=2,3,\dots.
\end{equation}
Then, the quantities $k_m$ are all densities of functionals conserved by \eqref{eq:BO}. 
The first several densities obtained by the Kaup-Matsuno scheme are
\begin{equation}
\begin{split}
k_1 &= u\\
k_2 &= \frac{1}{2}u^2-\frac{1}{2}i u\mathcal{H}[u]\\
k_3 &=\frac{1}{4}u^3-\frac{1}{4}u\mathcal{H}[u\mathcal{H}[u]]-\frac{1}{4}iu^2\mathcal{H}[u]-
\frac{1}{4}iu\mathcal{H}[u^2]+\epsilon\left(\frac{1}{2}iuu_x+\frac{1}{2}u\mathcal{H}[u_x]\right)\\
k_4 &=\frac{1}{8}u^4-\frac{1}{8}u^2\mathcal{H}[u\mathcal{H}[u]]-\frac{1}{8}u\mathcal{H}[u^2\mathcal{H}[u]]-\frac{1}{8}u\mathcal{H}[u\mathcal{H}[u^2]]+\frac{1}{8}u\mathcal{H}[u\mathcal{H}[u\mathcal{H}[u]]]\\
&\qquad{}-\frac{1}{8}iu^3\mathcal{H}[u]-\frac{1}{8}iu^2\mathcal{H}[u^2]-\frac{1}{8}iu\mathcal{H}[u^3]\\
&\quad\qquad{}+\epsilon\left(\frac{1}{2}u^2\mathcal{H}[u_x]+\frac{3}{4}u\mathcal{H}[uu_x]+\frac{1}{4}uu_x\mathcal{H}[u]+\frac{3}{4}iuu_x-\frac{1}{2}iu\mathcal{H}[u\mathcal{H}[u_x]]-\frac{1}{4}iu\mathcal{H}[u_x\mathcal{H}[u]]\right)\\
&\qquad\qquad{}+\epsilon^2\left(-\frac{1}{2}uu_{xx}+\frac{1}{2}iu\mathcal{H}[u_{xx}]\right).
\end{split}
\end{equation}
As is the case for the Fokas-Fuchssteiner scheme, and unlike the Nakamura scheme, the densities
produced by the Kaup-Matsuno scheme are all absolutely integrable for $u\in \mathscr{P}$.
\subsubsection{Comparison of the Schemes}
Although the three schemes clearly do not give rise to identical densities, they apparently produce exactly the same integrals.  That is, for $u\in\mathscr{P}$ we typically have $n_m$, $f_m$, and $k_m$
being different (unequal) expressions generated by derivatives of $u$ and applications of $\mathcal{H}$.  However, 
\begin{equation}
I_m[u]:=\int_\mathbb{R}n_m\,dx = \int_{\mathbb{R}}f_m\,dx = \int_{\mathbb{R}}k_m\,dx,\quad m=1,2,3,\dots.
\end{equation}
Here in the case of $\int_\mathbb{R}n_m\,dx$ the integral must generally be interpreted in the ``principal value at infinity'' sense as in \eqref{eq:intHzero}.
From each of the three schemes it follows that if $u$ is a smooth solution of \eqref{eq:BO}, then $dI_m/dt=0$ for all $m$.  
We do not provide a direct proof of the equivalence of the integrals generated by each of the three schemes here, although 
in the case where $\epsilon=0$ all three schemes produce the same result:
if $u\in\mathscr{P}$ is  independent of $\epsilon$, then
\begin{equation}
\lim_{\epsilon\downarrow 0}I_m(t) = \frac{1}{m}\int_\mathbb{R}u(x)^m\,dx.
\end{equation}
This is rather straightforward to show from the Nakamura and Fokas-Fuchssteiner schemes, while for the Kaup-Matsuno scheme it follows from a lemma proved in the appendix of \cite{MillerX10}.
In any case, this fact easily establishes the functional independence of the integrals $I_m[u]$, $m=1,2,3,\dots$.

The variational derivatives $\delta I_m/\delta u$ in fact play a dual role, as is proved in
both \cite{FokasF81} and \cite{Matsuno} (in the respective context of two different schemes):  they also serve as densities of integrals:
\begin{equation}
\int_{\mathbb{R}}\frac{\delta I_m}{\delta u}\,dx = (m-1)I_{m-1}[u],\quad m\ge 2, \quad
\text{or}\quad
I_m[u] = \int_\mathbb{R}\frac{1}{m}\frac{\delta I_{m+1}}{\delta u}\,dx,\quad m\ge 1.
\label{eq:varderivdensities}
\end{equation}

The main advantage of the Nakamura scheme is that it can be used to place all of the equations of
the Benjamin-Ono hierarchy (see below) in bilinear form after which Hirota's method can be applied
to deduce the form of the simultaneous $N$-soliton solution of the entire hierarchy.  This is 
done in \cite{Matsuno}, and we will use the resulting formulae below.  However, due to the presence
of terms of the form $\mathcal{H}[f]$ for $f\in\mathscr{P}$ that are not themselves of the form $f=g_x$ for $g\in\mathscr{P}$, the densities generated by the Nakamura scheme are generally not absolutely integrable and require a more careful interpretation of the integral.

The advantages of the Fokas-Fuchssteiner scheme
are (i) that it provides a direct one-term recurrence for the variational derivatives of the 
conserved quantities (which unlike densities are uniquely determined by the functional),
(ii) that the variational derivatives it generates are manifestly real and the corresponding
densities $f_m$ are simpler than in either of the other two schemes, 
and (iii) that it allows a direct proof of the fact that the functionals $I_m$ are all in involution with respect to the Poisson bracket:
\begin{equation}
\{I,J\}:=\int_\mathbb{R}\frac{\delta I}{\delta u}(x)\frac{\partial}{\partial x}\frac{\delta J}{\delta u}(x)\,dx,
\end{equation}
in other words, $\{I_j,I_k\}=0$ for all $j,k$.  

In the Kaup-Matsuno scheme, the quantities $\overline{N}_m:=k_m/u$ are the coefficients in the Laurent expansion
 about $\lambda=\infty$ of the eigenfunction $\overline{N}(\lambda;x,t)$ that is a simultaneous solution of the two
linear equations making up the Lax pair for the Benjamin-Ono equation.  Here $\lambda\in\mathbb{C}$ is the spectral parameter.  Because the integrals $I_m[u]:=\int_\mathbb{R}k_n\,dx$ are obtained by expanding a scattering eigenfunction, they equivalently encode the scattering data in an explicit way, and in \cite{KaupM98} one can find explicit formulae for $I_m$ in terms of the discrete spectrum and the reflection coefficient
of the direct scattering problem.  Thus, the Kaup-Matsuno scheme leads not just to an infinite collection of integrals of motion of the Benjamin-Ono equation, but also a hierarchy of trace formulae equivalently expressing each $I_m$ both as a functional of $u$ and also as functional of the 
scattering data.  These formulae can be used to deduce from initial data the asymptotic distribution of eigenvalues $\lambda_j$
of the scattering problem in the zero-dispersion limit (see \cite{Matsuno} and section 3.2 of \cite{MillerX10}).

\subsection{Construction of the Hierarchy}
The Benjamin-Ono equation \eqref{eq:BO} can be written in Hamiltonian form as
\begin{equation}
u_t=-\frac{\partial}{\partial x}\frac{\delta I_3}{\delta u}.
\end{equation}
The Noetherian symmetry of \eqref{eq:BO} associated with the conserved quantity $I_m$ is the Hamiltonian flow with Hamiltonian $I_m$:
\begin{equation}
u_{t_k} = -\frac{\partial}{\partial x}\frac{\delta I_{k+2}}{\delta u},\quad k=1,2,3,\dots.
\label{eq:Hierarchy}
\end{equation}
Here $t_k$ is the parameter of the symmetry group generated by $I_{k+2}$.  The fact that the
integrals $I_k$ are all in involution \cite{FokasF81} implies that these flows are all compatible (that is, the symmetry group is abelian), so given a smooth function $u_0\in \mathscr{P}$ and a positive integer $K$, 
there will exist a function $u(x,t_1,t_2,\dots,t_K)$ satisfying $u(x,0,0,\dots,0)=u_0(x)$
and equations \eqref{eq:Hierarchy} for $k=1,2,\dots,K$.  Equations \eqref{eq:Hierarchy} constitute
the Benjamin-Ono hierarchy.  

\section{Setting up the Zero-Dispersion Limit}
\subsection{Formulation of the Problem}
The problem we wish to consider is the following.  Let $u_0\in \mathscr{P}$ be given, an initial condition 
independent of $\epsilon$.  For each $\epsilon>0$ we may construct the simultaneous solution 
$u(x,t_1,t_2,\dots,t_K)$ of the Benjamin-Ono hierarchy \eqref{eq:Hierarchy} of commuting flows
satisfying the initial condition $u(x,0,0,\dots,0)=u_0(x)$.  The question of interest is the asymptotic behavior of $u(x,t_1,t_2,\dots,t_K)$ in the zero-dispersion limit $\epsilon\downarrow 0$.  
As a first step, we will address this problem by establishing the existence of the dispersionless limits (in appropriately weak topologies) of all of the 
conserved densities (see \eqref{eq:varderivdensities})
\begin{equation}
D_m:=\frac{1}{m}\frac{\delta I_{m+1}}{\delta u},\quad m=1,2,3,\dots.
\end{equation}
Note that $D_1=u$.  In general, $D_m$ differs from all three of $n_m$, $f_m$, and $k_m$ by a
``trivial'' density that integrates to zero for all $u\in\mathscr{P}$.  However, the densities $D_m$ are those that
most directly yield a dispersionless representation.  
\subsection{Admissible Initial Conditions}
We will further assume that $u_0\in\mathscr{P}$ satisfies the following conditions adapted from \cite{MillerX10}:
\begin{itemize}
\item $u_0(x)>0$ for all $x\in\mathbb{R}$.
\item There is a unique critical point $x_0\in\mathbb{R}$ for which $u_0'(x_0)=0$, and $u_0''(x_0)<0$
making $x_0$ the global, nondegenerate maximizer of $u_0$. 
\item $u_0$ exhibits power-law decay in its tails:  $\lim_{x\to\pm\infty}u_0(x)=0$ and 
\begin{equation}
\lim_{x\to\pm\infty}|x|^{q+1}u_0'(x)=C_\pm\quad\text{for some $q>1$,}
\end{equation}
where $C_+<0$ and $C_->0$ are constants.  
\item For each $k=1,2,\dots,K$ let $f(x)=u_0(x)^k$.  Then in each bounded interval there exist at most finitely many points $x=\xi$
at which $f''(\xi)=0$, and each is a simple inflection point:  $f'''(\xi)\neq 0$.
\end{itemize}
Such $u_0\in\mathscr{P}$ will be called \emph{admissible} initial conditions.  An example is shown in Figure~\ref{fig:IC}.
\begin{figure}[h]
\begin{center}
\includegraphics[width=0.3\linewidth]{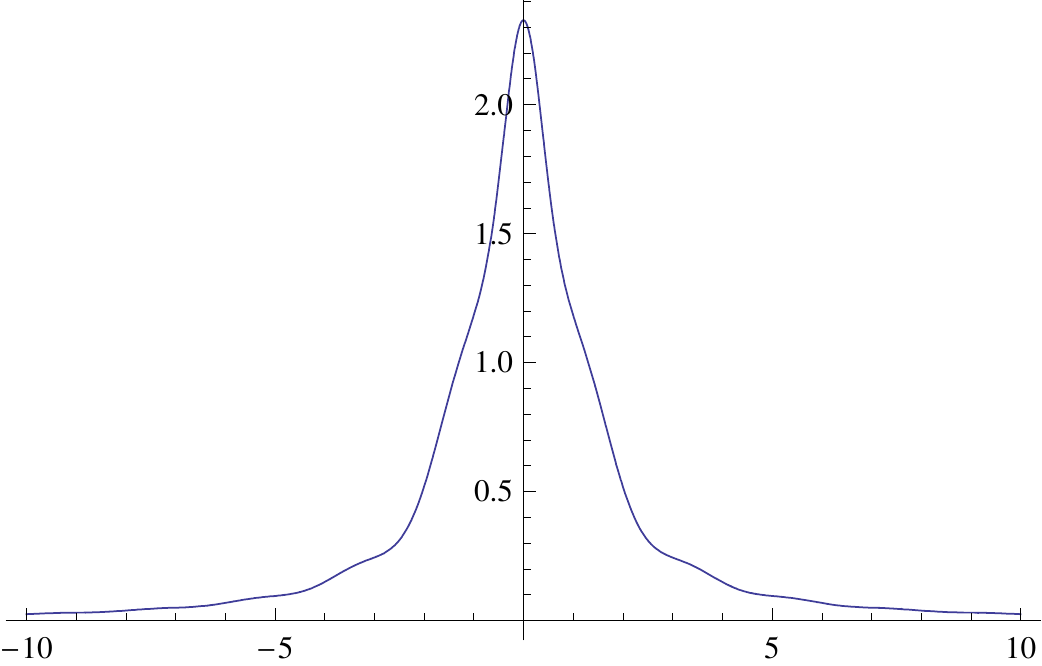}\hspace{0.049\linewidth}%
\includegraphics[width=0.3\linewidth]{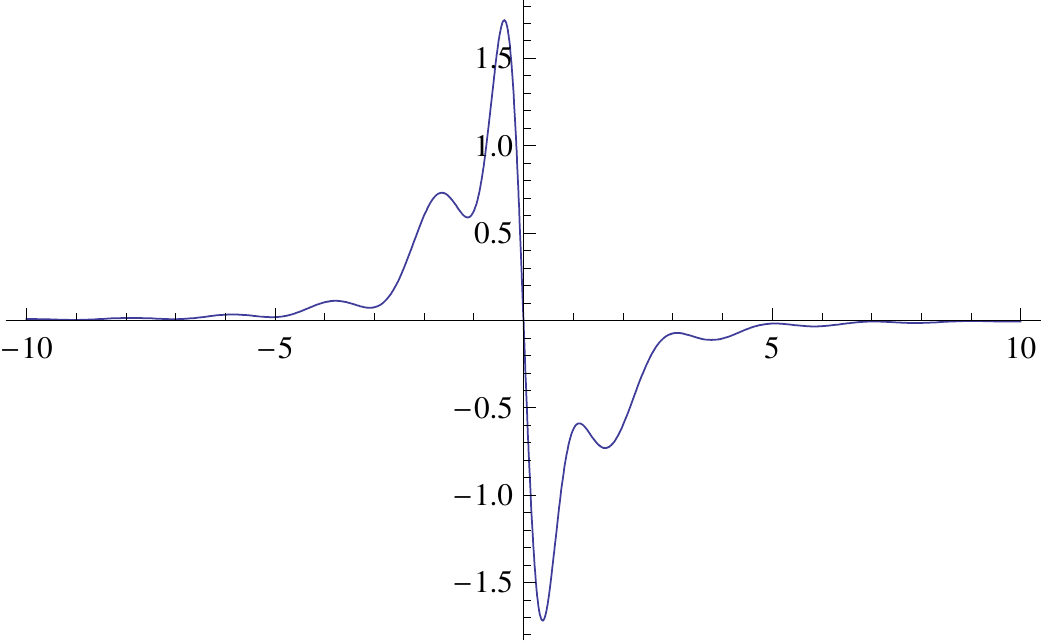}\hspace{0.049\linewidth}%
\includegraphics[width=0.3\linewidth]{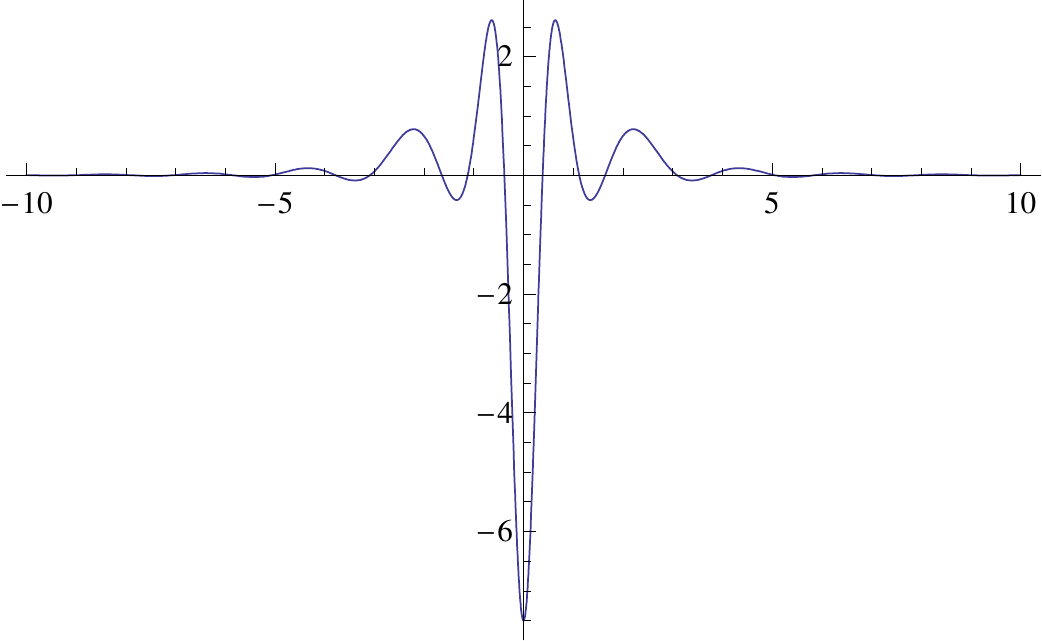}
\end{center}
\caption{An admissible initial condition for which $u_0'(x)=-5x(1+\cos(\pi x)/2)/(1+x^2)^2$.  Left:  $u_0(x)$.
Center:  $u_0'(x)$.  Right:  $u_0''(x)$.  Note that this particular admissible initial condition has an
infinite number of inflection points asymptotically near integer values of $x$ for $x$ large.}
\label{fig:IC}
\end{figure}

\subsection{Reflectionless Modification of the Initial Data}
Our method will be to study the Benjamin-Ono hierarchy with admissible initial condition $u_0$ using the inverse-scattering transform for the $x$-part of the Lax pair (see \cite{FokasA83,KaupM98}).  
The first step in the process is to associate to $u_0$ its scattering data, consisting of a complex-valued
reflection coefficient $\beta(\lambda)$, $\lambda>0$, as well as discrete eigenvalues $\{\lambda_j<0\}$ and corresponding phase constants $\{\gamma_j\in\mathbb{R}\}$.  Matsuno was the first to observe that the conservation laws can be used to deduce information 
about the scattering data (see \cite{Matsuno}, section 3.3) when the parameter $\epsilon>0$ (which appears parametrically in the scattering problem although the potential $u_0$ is independent of $\epsilon$) is small.  His analysis becomes more rigorous with the use of the trace formulae arising 
from the Kaup-Matsuno scheme.  For admissible initial conditions, Matsuno's main results are:
\begin{itemize}
\item The reflection coefficient $\beta(\lambda)$ is small when $\epsilon\ll 1$.  
\item The number $N$ of eigenvalues is large when $\epsilon\ll 1$, but 
\begin{equation}
\lim_{\epsilon\downarrow 0}\epsilon N = M:=\int_\mathbb{R}u_0(x)\,dx.
\label{eq:massdef}
\end{equation}
The number $N[a,b]$ of eigenvalues in the interval $[a,b]$, $-L\le a\le b\le 0$, $L:=\max_{x\in\mathbb{R}}u_0(x)$, satisfies
\begin{equation}
\lim_{\epsilon\downarrow 0}\epsilon N[a,b] = \int_a^b F(\lambda)\,d\lambda,\quad F(\lambda):=\frac{1}{2\pi}(x_+(\lambda)-x_-(\lambda)).
\end{equation}
Here $x_-(\lambda)<x_+(\lambda)$ are defined for $-L<\lambda<0$ as the two solutions of the
equation $u_0(x)=-\lambda$.  They play the role of turning points in this theory.
\end{itemize}
(Note that the ``mass'' $M$ defined by \eqref{eq:massdef} is finite for admissible $u_0$ although it is
not so for general elements of $\mathscr{P}$.)  The solution of the Benjamin-Ono hierarchy for an admissible initial condition is therefore
in particular approximately reflectionless in the zero-dispersion limit.  In the absence of reflection 
the exact solution of the hierarchy is a multi-soliton solution that takes the form \cite{Matsuno}:
\begin{equation}
u(x,t_1,t_2,\dots,t_K)=2\epsilon\frac{\partial}{\partial x}\Im\{\log(\tau(x,t_1,t_2,\dots,t_k))\}
\label{eq:ureflectionless}
\end{equation}
where the ``tau function'' is
\begin{equation}
\tau(x,t_1,t_2,\dots,t_K):=\det(\mathbb{I}+i\epsilon^{-1}\mathbf{A})
\end{equation}
and $\mathbf{A}$ is an $N\times N$ Hermitian matrix with constant off-diagonal elements
\begin{equation}
A_{nm}:=\frac{2i\epsilon\sqrt{\lambda_n\lambda_m}}{\lambda_n-\lambda_m},\quad n\neq m
\label{eq:Aoffdiag}
\end{equation}
and diagonal elements depending explicitly on $x, t_1,t_2,\dots,t_K$:
\begin{equation}
A_{nn}:=-2\lambda_n\left(X(\lambda_n;x,t_1,t_2,\dots,t_K)+\gamma_n\right),
\label{eq:Adiag}
\end{equation}
where for future convenience we define a polynomial in $\lambda$ by
\begin{equation}
X(\lambda;x,t_1,t_2,\dots,t_K):=x-\sum_{k=1}^K(k+1)(-\lambda)^kt_k.
\end{equation}

To specify an appropriate family of exact solutions of the Benjamin-Ono hierarchy, first define the
exact number of approximate eigenvalues by 
\begin{equation}
N(\epsilon):=\left\lfloor\frac{M}{\epsilon}\right\rfloor.
\end{equation}
Then, define approximations $\{\tilde{\lambda}_n\}_{n=1}^{N(\epsilon)}$ with $-L<\tilde{\lambda}_1<\tilde{\lambda}_2<\cdots <\tilde{\lambda}_{N(\epsilon)}<0$ by quantizing the Matsuno eigenvalue
density:
\begin{equation}
\int_{-L}^{\tilde{\lambda}_n}F(\lambda)\,d\lambda = \epsilon\left(n-\frac{1}{2}\right),\quad n=1,2,\dots,N(\epsilon).
\end{equation}
Next, define approximations for the phase constants $\{\gamma_n\}$ by setting \cite{MillerX10}:
\begin{equation}
\tilde{\gamma}_n:=\gamma(\tilde{\lambda}_n),\quad\gamma(\lambda):=-\frac{1}{2}(x_+(\lambda)+x_-(\lambda)),\quad -L\le \lambda<0.
\end{equation}
Now, for each $\epsilon>0$, let $\tilde{u}=\tilde{u}(x,t_1,t_2,\dots,t_K)$ denote the exact
solution of the Benjamin-Ono hierarchy given by the reflectionless solution formula \eqref{eq:ureflectionless} with determinantal tau function $\tilde{\tau}$ involving the $N(\epsilon)\times N(\epsilon)$ Hermitian matrix $\tilde{\mathbf{A}}$ whose elements are given by \eqref{eq:Aoffdiag}--\eqref{eq:Adiag} with $\lambda_n$ and $\gamma_n$ replaced by $\tilde{\lambda}_n$ and $\tilde{\gamma}_n$ respectively, for $1\le n\le N(\epsilon)$.   In \cite{MillerX10} it is proved that for
admissible $u_0$, 
\begin{equation}
\lim_{\epsilon\downarrow 0}\int_\mathbb{R}|u_0(x)-\tilde{u}(x,0,0,\dots,0)|^2\,dx = 0,
\end{equation}
so that the replacement of the scattering data we have just made amounts to a modification of the initial condition that is negligible in the $L^2(\mathbb{R})$ sense in the zero-dispersion limit.

\section{Distributional Limits of Conserved Densities}
The conserved densities $\tilde{D}_m$, $m\ge 1$ (we are using tildes to remind the reader that these are expressions in $\tilde{u}$, its derivatives in $x$, and Hilbert transforms thereof), all have representations in terms of the tau function
associated with $\tilde{u}$.  Indeed, since $\tilde{u}$ satisfies \eqref{eq:Hierarchy} it follows from
using the formula \eqref{eq:ureflectionless} for $\tilde{u}$ that 
\begin{equation}
2\epsilon \frac{\partial^2}{\partial x\partial t_k}\Im\{\log(\tilde{\tau})\} = -(k+1)\frac{\partial}{\partial x}\tilde{D}_{k+1}.
\end{equation}
Integration in $x$ using decay at $x=\pm\infty$ to fix the integration constant yields the formulae
\begin{equation}
\tilde{D}_m = -\frac{2\epsilon}{m}\frac{\partial}{\partial t_{m-1}}\Im\{\log(\tilde{\tau})\},\quad m=2,3,\dots,K+1.
\end{equation}
Of course since $\tilde{D}_1=\tilde{u}$ a slightly different formula holds for $\tilde{D}_1$ according
to \eqref{eq:ureflectionless}.  In principle, this gives a way of evaluating $\tilde{D}_m$ for arbitrary $m$, although one must include dependence on a sufficient number of times $t_k$ by choosing $K$ large enough.

Let $\alpha_1\le\alpha_2\le\cdots\le\alpha_{N(\epsilon)}$ denote the (real) eigenvalues of
$\tilde{\mathbf{A}}$.  Then, we may write the densities in the form
\begin{equation}
\tilde{u}=\tilde{D}_1=\frac{\partial \tilde{U}}{\partial x},\quad \tilde{D}_m=-\frac{1}{m}\frac{\partial \tilde{U}}{\partial t_{m-1}},\quad m=2,3,\dots,K+1,
\end{equation}
where
\begin{equation}
\tilde{U}(x,t_1,t_2,\dots,t_K):=\epsilon\sum_{n=1}^{N(\epsilon)} 2\arctan(\epsilon^{-1}\alpha_n).
\label{eq:Uformula}
\end{equation}
Remarkably, the function $\tilde{U}$ has a completely explicit zero-dispersion limit:
\begin{proposition}
Uniformly on compact subsets of $\mathbb{R}^{K+1}$, 
\begin{equation}
\lim_{\epsilon\downarrow 0}\tilde{U}(x,t_1,t_2,\dots,t_K) =V(x,t_1,t_2,\dots,t_K):= \int_\mathbb{R}\pi\,\mathrm{sgn}(\alpha)
G(\alpha;x,t_1,t_2,\dots,t_K)\,d\alpha
\label{eq:Vdef}
\end{equation}
where
\begin{equation}
G(\alpha;x,t_1,t_2,\dots,t_K):=-\frac{1}{4\pi}\int_{-L}^0\chi_I(\alpha)\frac{d\lambda}{\lambda}
\label{eq:Gdef}
\end{equation}
and where $\chi_I(\alpha)$ denotes the indicator function of the interval
\begin{equation}
I:=\left[-2\lambda\left(X(\lambda;x,t_1,t_2,\dots,t_K)-x_+(\lambda)\right),
-2\lambda\left(X(\lambda;x,t_1,t_2,\dots,t_K)-x_-(\lambda)\right)\right].
\end{equation}
\end{proposition}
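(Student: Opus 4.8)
The plan is to read $\tilde U$ as a linear statistic of the eigenvalues of $\tilde{\mathbf A}$ and to identify $G$ as the limiting density of states of that matrix. Writing the sum in \eqref{eq:Uformula} as $\tilde U=\epsilon\sum_n g_\epsilon(\alpha_n)$ with $g_\epsilon(\alpha):=2\arctan(\epsilon^{-1}\alpha)$, and noting that $g_\epsilon(\alpha)\to\pi\,\mathrm{sgn}(\alpha)$ pointwise as $\epsilon\downarrow0$, the natural guess is
\[
\tilde U\longrightarrow \int_\mathbb{R}\pi\,\mathrm{sgn}(\alpha)\rho(\alpha)\,d\alpha,
\]
where $\rho$ is the weak-$\ast$ limit of the empirical measure $\epsilon\sum_n\delta_{\alpha_n}$. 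Thus the proposition reduces to two assertions: (i) this density of states exists and equals $G$; and (ii) one may replace $g_\epsilon$ by $\pi\,\mathrm{sgn}$ in the limit, despite the jump of the latter at $\alpha=0$.

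For (i), I would exploit the local structure of $\tilde{\mathbf A}$. The quantized eigenvalues have spacing $\tilde\lambda_{n+1}-\tilde\lambda_n\sim\epsilon/F(\tilde\lambda_n)$, and near a fixed index the matrix is nearly of convolution (Toeplitz) type: the diagonal entry $A_{nn}=-2\tilde\lambda_n(X(\tilde\lambda_n)+\tilde\gamma_n)$ is exactly the midpoint of the interval $I$ evaluated at $\lambda=\tilde\lambda_n$, while for $m=n-k$ one has $A_{n,n-k}\sim 2i\lambda F(\lambda)/k$. Using $\sum_{k\ne0}k^{-1}e^{ik\theta}=i(\pi-\theta)$ on $(0,2\pi)$, the associated Fourier symbol is
\[
\sigma_\lambda(\theta)=A_{nn}-2\lambda F(\lambda)(\pi-\theta),\qquad\theta\in(0,2\pi),
\]
an affine (hence monotone) map carrying $(0,2\pi)$ onto $I$, since its half-width $-2\pi\lambda F(\lambda)$ is precisely half of $|I|=-4\pi\lambda F(\lambda)$ and its midpoint is $A_{nn}$. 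Consequently each local block fills $I$ uniformly, and summing over blocks with the index-to-spectral Jacobian $\epsilon\,dn=F(\lambda)\,d\lambda$ gives, for continuous $\phi$,
\[
\epsilon\sum_n\phi(\alpha_n)\longrightarrow\int_{-L}^0 F(\lambda)\left(\int_\mathbb{R}\phi(\alpha)\frac{\chi_I(\alpha)}{|I|}\,d\alpha\right)d\lambda=\int_\mathbb{R}\phi(\alpha)G(\alpha)\,d\alpha,
\]
because $F/|I|=-1/(4\pi\lambda)$. As a consistency check, $\int_\mathbb{R}G\,d\alpha=\int_{-L}^0F\,d\lambda=M$, matching $\lim\epsilon N(\epsilon)$.

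To turn this heuristic into a theorem I would either (a) use the method of moments, first establishing that $\|\tilde{\mathbf A}\|=O(1)$ uniformly — a genuine point, since the off-diagonal row sums grow like $\log(1/\epsilon)$ and boundedness must come from the $L^2$-boundedness of the discrete Cauchy/Hilbert kernel — so that $G$ is compactly supported, and then matching $\epsilon\,\mathrm{Tr}(\tilde{\mathbf A}^p)$ to $\int\alpha^p G\,d\alpha$; or (b) adapt directly the localization and interlacing estimates already developed for the single-flow case in \cite{MillerX10}, the only change being that the affine profile of that case is replaced by the polynomial $X(\lambda;x,t_1,\dots,t_K)$, which enters only through the continuous placement of the bands $I$. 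For (ii), since $|g_\epsilon-\pi\,\mathrm{sgn}|\le\min(\pi,2\epsilon/|\alpha|)$, for each fixed $\delta>0$ the contribution of eigenvalues with $|\alpha_n|>\delta$ is $O(\epsilon)$ (as $\epsilon\sum_{|\alpha_n|>\delta}|\alpha_n|^{-1}$ is a bounded Riemann sum for $\int_{|\alpha|>\delta}G/|\alpha|\,d\alpha$), while the near-origin part is at most $\pi\,\epsilon\,\#\{\,|\alpha_n|<\delta\,\}\to\pi\int_{-\delta}^\delta G$; letting $\epsilon\downarrow0$ and then $\delta\downarrow0$ kills the error by absolute continuity of $G$.

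The main obstacle is step (i), and within it the rigorous passage from the Toeplitz-symbol heuristic to a proof: the off-diagonal entries form a singular Cauchy kernel whose row sums diverge logarithmically, so establishing the $O(1)$ operator norm and the uniform filling of each band $I$ demands genuine harmonic-analytic control rather than a naive perturbative estimate, and the quantization error between the true eigenvalue spacing and the symbol prediction must be tracked at the band edges. Carrying all of these estimates out uniformly in $(x,t_1,\dots,t_K)$ on compact sets is then essentially bookkeeping, since the band endpoints $-2\lambda(X-x_\pm(\lambda))$ depend polynomially and continuously on these parameters through $X$ and hence remain in a fixed compact region.
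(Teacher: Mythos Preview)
Your proposal is correct and follows essentially the same route as the paper: the paper's proof is itself only a sketch that (i) invokes Wigner's method of moments together with diagonal/Toeplitz approximations of $\tilde{\mathbf A}$ to show the empirical eigenvalue measure converges weak-$\ast$ to $G\,d\alpha$, with an extreme-eigenvalue bound to pass from moments to weak-$\ast$ convergence, and then (ii) uses a dominated-convergence argument to replace $2\arctan(\epsilon^{-1}\alpha)$ by $\pi\,\mathrm{sgn}(\alpha)$ --- exactly your steps (i)(a) and (ii). Your Toeplitz-symbol computation identifying the range with the interval $I$ is in fact more explicit than what the paper records; the paper simply points to \cite{MillerX10} (sections 4.1--4.3) for the details, noting that the only change in the hierarchy setting is that $X$ becomes a polynomial in $\lambda$, which is precisely your observation under option (b).
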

\begin{proof}
This is a simple generalization of Proposition~4.2 from \cite{MillerX10} and it is proved in exactly the same way (see in particular sections 4.1--4.3 of that reference).  For the reader's convenience we will
simply describe the idea of the proof.  

The key observation is that the eigenvalues of the matrix $\tilde{\mathbf{A}}$  have a limiting density $G(\alpha;x,t_1,t_2,\dots,t_K)$; that is, the normalized (to total mass $M$) counting measures
of eigenvalues of $\tilde{\mathbf{A}}$ converge in the weak-$\ast$ sense to $G\,d\alpha$ as $\epsilon\downarrow 0$.  This fact is 
proved using Wigner's method of moments.  One studies the asymptotic behavior of traces of arbitrary powers
of the $N(\epsilon)\times N(\epsilon)$ matrix $\tilde{\mathbf{A}}$ in the limit $N(\epsilon)\to\infty$,
and with the use of some combinatorial arguments and approximation in terms of diagonal and Toeplitz matrices one obtains leading-order asymptotic formulae for these, which in turn are proportional to moments of the eigenvalue counting measures.  Then one solves the moment problem for the limiting
moments to obtain $G\,d\alpha$.  Finally, by estimating the extreme eigenvalues of $\tilde{\mathbf{A}}$
one is able to convert convergence of moments to weak-$\ast$ convergence.

Next, one observes that the exact formula \eqref{eq:Uformula} can be written as the integral of the
function $2\arctan(\epsilon^{-1}\alpha)$ against the normalized counting measure of eigenvalues
of $\tilde{\mathbf{A}}$.  Pointwise, the integrand converges to $\pi\,\mathrm{sgn}(\alpha)$ as
$\epsilon\downarrow 0$, and by a careful dominated convergence argument one then establishes the desired
locally uniform convergence of $\tilde{U}$ to $V$.
\end{proof}

Before giving our next result, we recall the inviscid Burgers hierarchy.  Consider the equation
\begin{equation}
u_\mathrm{B} = u_0\left(x-\sum_{k=1}^K(k+1)u_\mathrm{B}^kt_k\right).
\label{eq:Burgersimplicit}
\end{equation}
For $t_1,\dots,t_K$ all sufficiently small (given $x\in\mathbb{R}$) it follows from the implicit function theorem that there exists
a unique solution $u_\mathrm{B}(x,t_1,t_2,\dots,t_K)\approx u_0(x)$.  As $t_k$ increases from zero,
there will be bifurcation points at which the number of solutions of \eqref{eq:Burgersimplicit}
increases by a finite even integer (this is due to the condition on inflection points of $f(x)=u_0(x)^k$
satisfied by admissible $u_0$).  Therefore, near a given $x$ and at given values of $t_1,t_2,\dots,t_K$, there will generically
be an odd finite number $2P(x,t_1,t_2,\dots,t_K)+1$ of distinct solutions $u_{\mathrm{B},0}<u_{\mathrm{B},1}<\dots<u_{\mathrm{B},2P}$ to \eqref{eq:Burgersimplicit}, and each is differentiable with
respect to all of the independent variables $x$ and $t_1,t_2,\dots,t_K$.  By differentiation of 
\eqref{eq:Burgersimplicit} one observes that each of the solution branches is a function $u_\mathrm{B}(x,t_1,t_2,\dots,t_K)$ that simultaneously satisfies the equations
\begin{equation}
\frac{\partial u_\mathrm{B}}{\partial t_k}+(k+1)u_\mathrm{B}^k\frac{\partial u_\mathrm{B}}{\partial x}=0\quad\text{or}\quad\frac{\partial u_\mathrm{B}}{\partial t} +\frac{\partial}{\partial x}u_\mathrm{B}^{k+1}=0,\quad k=1,2,\dots,K.
\label{eq:BurgersHierarchy}
\end{equation}
These are the partial differential equations of the inviscid Burgers hierarchy.  The simultaneous
solution of these equations with initial condition $u_\mathrm{B}(x,0,0,\dots,0)=u_0(x)$ is accomplished
by the method of characteristics and produces the solution in implicit form \eqref{eq:Burgersimplicit}.
We note that whereas typically when Burgers-type equations appear in the theory of partial differential equations one is interested in single-valued weak solutions representing shock waves, our interest here is in the multivalued solution produced by finding all real solutions of the implicit equation \eqref{eq:Burgersimplicit}.  See Figures~\ref{fig:IC} and \ref{fig:Burgers}.
\begin{figure}[h]
\begin{center}
\includegraphics[width=0.3\linewidth]{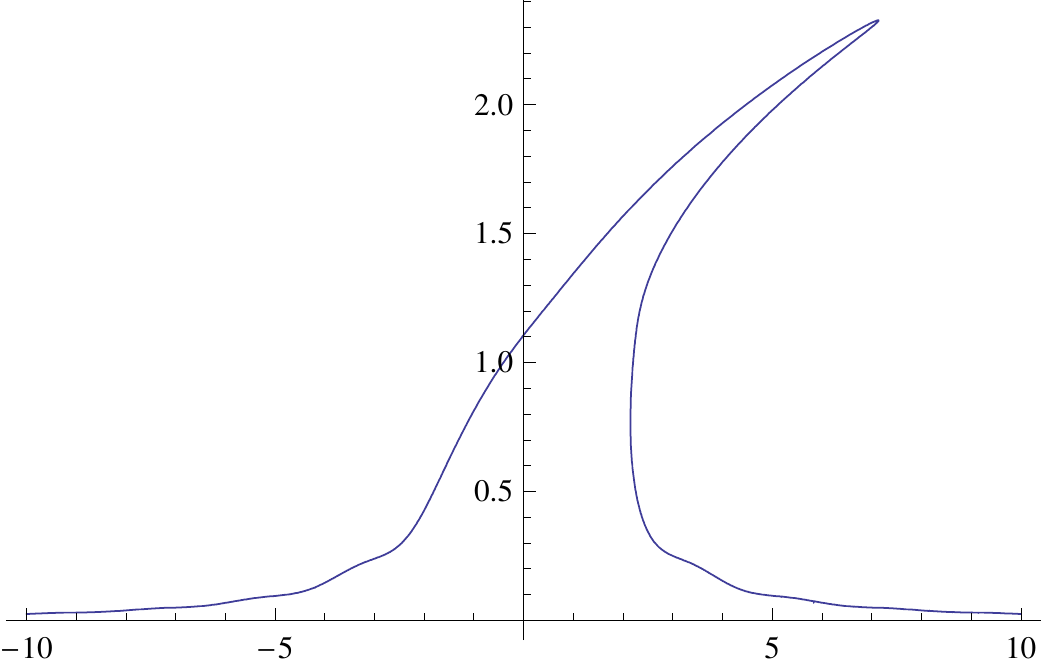}\hspace{0.049\linewidth}%
\includegraphics[width=0.3\linewidth]{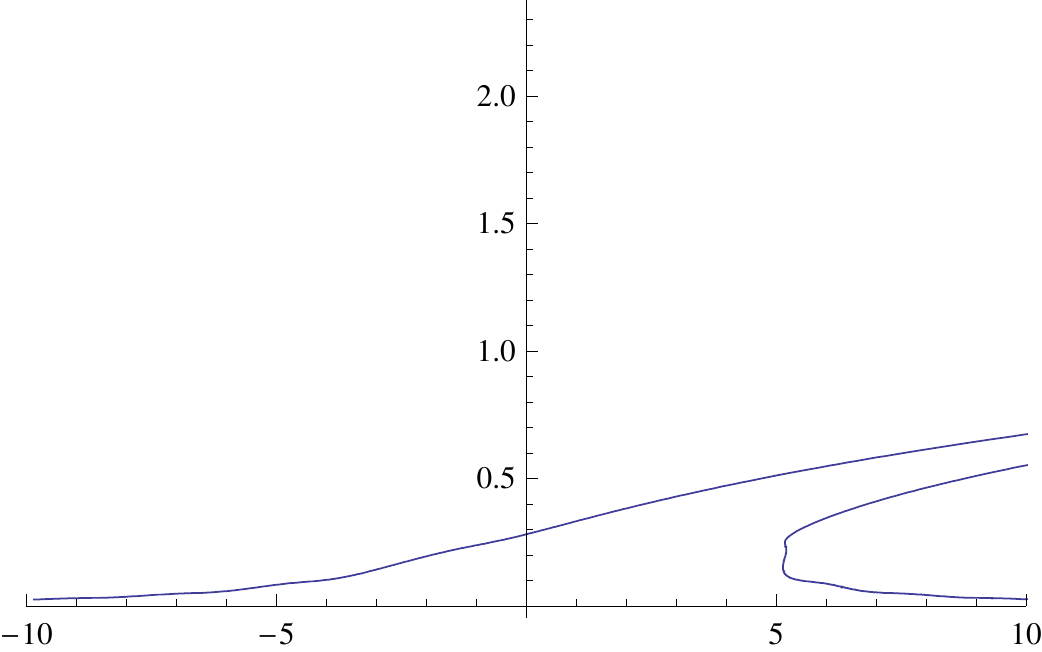}\hspace{0.049\linewidth}%
\includegraphics[width=0.3\linewidth]{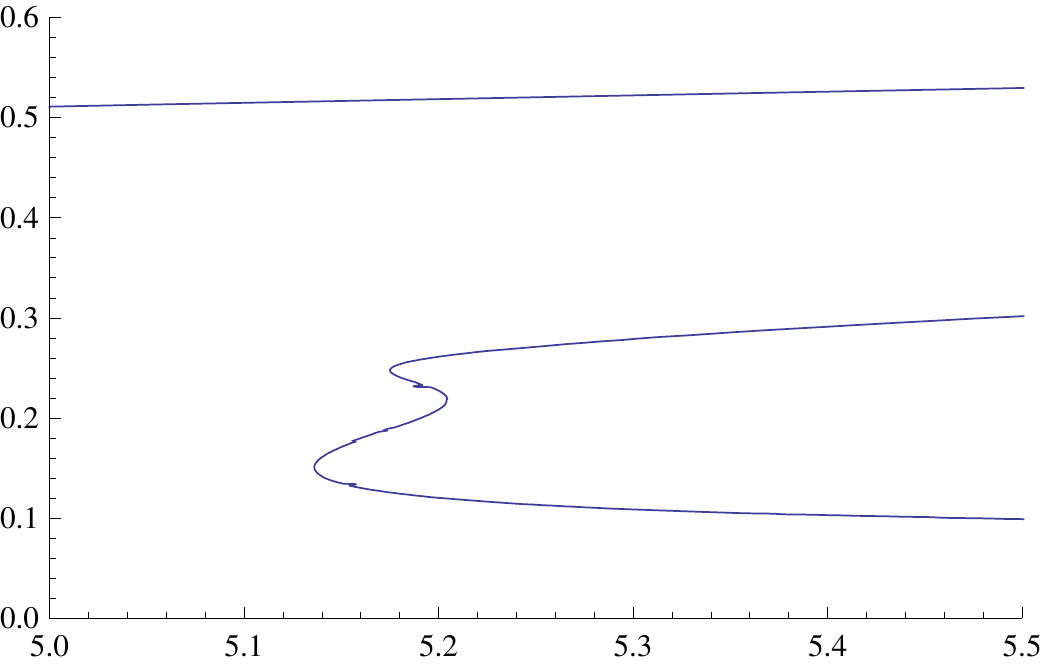}
\end{center}
\caption{Left and center:  the solution $u_\mathrm{B}$ of \eqref{eq:Burgersimplicit} with initial data as
specified in Figure~\ref{fig:IC} with $t_k=0$ for $k>3$ and with $t_1=t_2=t_3=0.1$ and $t_1=t_2=t_3=3$ respectively.  Right:  a close-up of the case when $t_1=t_2=t_3=3$ displaying  intervals of $x$ with
$P=0,1,2$ (one, three, and five branches, respectively).}
\label{fig:Burgers}
\end{figure}

\begin{proposition}  The function $V$ is of class $C^1(\mathbb{R}^{K+1})$, and 
\begin{equation}
\frac{\partial V}{\partial x}=\sum_{n=0}^{2P}(-1)^nu_{\mathrm{B},n}(x,t_1,t_2,\dots,t_K)
\end{equation}
while
\begin{equation}
-\frac{1}{m}\frac{\partial V}{\partial t_{m-1}} =\sum_{n=0}^{2P}
(-1)^n \frac{1}{m}u_{\mathrm{B},n}(x,t_1,t_2,\dots,t_K)^m,\quad m=2,3,\dots,K+1,
\end{equation}
where $P=P(x,t_1,t_2,\dots,t_K)$.
Both of these formulae assume that $(x,t_1,t_2,\dots,t_K)\in\mathbb{R}^{K+1}$ is a point at which
the integer $P$ is well-defined; however, since new solution branches bifurcate in pairs from the same
value of $u_\mathrm{B}$ it is clear that the formulae extend by continuity to all of $\mathbb{R}^{K+1}$.
\end{proposition}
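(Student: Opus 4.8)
The plan is to reduce the double integral defining $V$ to a single elementary integral over $\lambda$, differentiate under the integral sign, and recognize the resulting integrand in terms of the branches of \eqref{eq:Burgersimplicit}.

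First I would carry out the $\alpha$-integration in \eqref{eq:Vdef}. Inserting \eqref{eq:Gdef}, exchanging the order of integration, and using $\int_a^b\mathrm{sgn}(\alpha)\,d\alpha=|b|-|a|$ on the interval $I=[a,b]$ with endpoints $a=-2\lambda(X-x_+(\lambda))$ and $b=-2\lambda(X-x_-(\lambda))$ (correctly ordered since $-2\lambda>0$ and $x_-<x_+$), the factor $-2\lambda$ cancels the weight $1/\lambda$ and one obtains the clean representation
\[
V=\frac{1}{2}\int_{-L}^0\left(\,|X(\lambda;x,t_1,\dots,t_K)-x_-(\lambda)|-|X(\lambda;x,t_1,\dots,t_K)-x_+(\lambda)|\,\right)\,d\lambda.
\]

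Next I would differentiate. Since $X$ is affine in $x$ and polynomial in the $t_k$, with $\partial X/\partial x=1$ and $\partial X/\partial t_j=-(j+1)(-\lambda)^j$, and since for fixed $\lambda$ the integrand is Lipschitz in the parameters and fails to be smooth only where $X(\lambda)=x_\pm(\lambda)$, differentiation under the integral is justified by dominated convergence over the bounded interval $(-L,0)$. Using $\partial_X|X-c|=\mathrm{sgn}(X-c)$ together with $\mathrm{sgn}(X-x_-)-\mathrm{sgn}(X-x_+)=2\,\chi_{\{x_-<X<x_+\}}$, and recalling that $x_-(\lambda)<X<x_+(\lambda)$ is precisely the condition $u_0(X(\lambda))>-\lambda$, I arrive at
\[
\frac{\partial V}{\partial x}=\int_{-L}^0\chi_{\{\Phi>0\}}\,d\lambda,\qquad
\frac{\partial V}{\partial t_j}=-(j+1)\int_{-L}^0(-\lambda)^j\,\chi_{\{\Phi>0\}}\,d\lambda,
\]
where $\Phi(\lambda):=u_0(X(\lambda;x,t_1,\dots,t_K))+\lambda$.

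The key identification is that, under $\mu=-\lambda$, the zeros of $\Phi$ are exactly the values $\mu=u_{\mathrm{B}}$ solving the implicit equation \eqref{eq:Burgersimplicit}: the zeros $\lambda_n$ of $\Phi$ in $(-L,0)$ therefore satisfy $\lambda_n=-u_{\mathrm{B},n}$, ordered as $\lambda_0>\lambda_1>\cdots>\lambda_{2P}$. Since $\Phi(0^-)=u_0(x)>0$ while $\Phi(-L^+)=u_0(X(-L))-L\le 0$, the sign of $\Phi$ on $(\lambda_n,\lambda_{n-1})$ (with $\lambda_{-1}:=0$) is $(-1)^n$, so $\{\Phi>0\}$ is the union of the intervals with $n$ even. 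Integrating $1$ and $(-\lambda)^j$ over this union and telescoping yields $\partial V/\partial x=\sum_{n=0}^{2P}(-1)^nu_{\mathrm{B},n}$ and, with $j=m-1$, $\partial V/\partial t_{m-1}=-\sum_{n=0}^{2P}(-1)^nu_{\mathrm{B},n}^m$, which are the stated formulae.

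Finally, for the $C^1$ claim the two derivative formulae are manifestly continuous on the open set where $P$ is locally constant. At a fold bifurcation two adjacent branches merge, $u_{\mathrm{B},k}=u_{\mathrm{B},k+1}$; carrying opposite signs in the alternating sums, their contributions cancel in the limit, while all higher branches shift index by two and retain their parity, so the candidate derivatives extend continuously across the bifurcation set. I expect the main obstacle to be upgrading ``differentiable with continuous derivative off the bifurcation set'' to genuine membership in $C^1(\mathbb{R}^{K+1})$; I would do this by integrating the globally continuous candidate derivatives along coordinate lines and invoking the fundamental theorem of calculus to identify them with the true partials everywhere. (The integrand of $V$ itself is unbounded as $\lambda\to 0^-$, where $x_\pm(\lambda)\to\pm\infty$; its integrability, and hence the finiteness of $V$ underlying the differentiation, is exactly what the power-law tail condition $q>1$ guarantees, whereas the integrands for the partials are bounded and present no difficulty.)
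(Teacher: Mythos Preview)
Your argument is correct and is essentially the paper's own proof in a slightly more compact packaging. Your identity $V=\tfrac12\int_{-L}^0(|X-x_-|-|X-x_+|)\,d\lambda$ is exactly the paper's formula $V=\int_{-L}^0 J\,d\lambda$ with the piecewise definition of $J$ rewritten using absolute values (one checks case by case that $\tfrac12(|X-x_-|-|X-x_+|)$ equals $-\pi F$, $X+\gamma$, or $\pi F$ according as $X<x_-$, $x_-\le X\le x_+$, or $X>x_+$). Since $J$ is continuous across the switching points $X=x_\pm$, the Leibniz boundary terms the paper alludes to in fact vanish, and what remains is precisely your indicator integral $\int_{-L}^0\chi_{\{x_-<X<x_+\}}\,\partial X\,d\lambda$; the subsequent identification of $\{x_-<X<x_+\}=\{\Phi>0\}$ with a union of intervals bounded by the Burgers branches and the telescoping evaluation are the same in both arguments.
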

\begin{proof}
The proof is virtually identical to that of Lemma 4.12 from \cite{MillerX10}.  The idea is as follows.
With the use of the explicit formula \label{eq:Gdef} for $G(\alpha;x,t_1,t_2,\dots,t_K)$ one can exchange the order of
integration in the formula \eqref{eq:Vdef} to obtain
\begin{equation}
V(x,t_1,t_2,\dots,t_K)=\int_{-L}^0 J(\lambda;x,t_1,t_2,\dots,t_K)\,d\lambda
\label{eq:VJ}
\end{equation}
where
\begin{equation}
J(\lambda;x,t_1,t_2,\dots,t_K):=\begin{cases}
-\pi F(\lambda),&\quad X(\lambda;x,t_1,t_2,\dots,t_K)< x_-(\lambda)\\
X(\lambda;x,t_1,t_2,\dots,t_K) +\gamma(\lambda),&\quad
x_-(\lambda)\le X(\lambda;x,t_1,t_2,\dots,t_K)\le x_+(\lambda)\\
\pi F(\lambda),&\quad X(\lambda;x,t_1,t_2,\dots,t_K)> x_+(\lambda).
\end{cases}
\end{equation}
The integrand therefore has a different form as a function of $\lambda$ in three different types
of subintervals of $[-L,0]$, with boundary points given by the solutions $\lambda$ of the equations
\begin{equation}
X(\lambda;x,t_1,t_2,\dots,t_K)=x_\pm(\lambda).
\end{equation}
Recalling that $x_\pm(\lambda)$ are two branches of the inverse function of $u_0$ in the sense that for $-L<\lambda<0$, $u_0(x_\pm(\lambda))=-\lambda$, both of these equations can be combined in the form
\begin{equation}
-\lambda=u_0(X(\lambda;x,t_1,t_2,\dots,t_K)),
\end{equation}
which one immediately notices is the same implicit equation \eqref{eq:Burgersimplicit} providing the multivalued solution of the Burgers hierarchy, under the substitution $u_\mathrm{B}=-\lambda$.
Differentiation of \eqref{eq:VJ} using Leibniz' rule to take into account the moving boundaries
then yields the desired formulae.
\end{proof}

Now we may formulate our main result.
\begin{theorem}
Let $v\in L^2(\mathbb{R})$.  Then
\begin{equation}
\lim_{\epsilon\downarrow 0} \int_\mathbb{R}\tilde{D}_1(x,t_1,t_2,\dots,t_K)v(x)\,dx = 
\int_\mathbb{R}\left(\sum_{n=0}^{2P}(-1)^nu_{\mathrm{B},n}(x,t_1,t_2,\dots,t_K)\right)v(x)\,dx
\label{eq:L2weak}
\end{equation}
holds uniformly for $(t_1,\dots,t_K)$ in compact subsets of $\mathbb{R}^K$.  Recalling that $\tilde{D}_1=\tilde{u}$, this means that $\tilde{u}$ converges to the alternating sum of branches of the multivalued solution of the Burgers hierarchy with initial condition $u_0$ in the weak $L^2(\mathbb{R}_x)$ sense.  Moreover, the convergence is in the strong $L^2(\mathbb{R}_x)$ sense if $t_1,\dots,t_K$ are all sufficiently small that $u_\mathrm{B}$ is
single-valued as a function of $x$.

Now let  $\phi\in\mathscr{D}(\mathbb{R})$ be a test function.  Then for $m=2,3,\dots,K+1$,
\begin{equation}
\lim_{\epsilon\downarrow 0}\int_\mathbb{R}\tilde{D}_m(x,t_1,t_2,\dots,t_K)\phi(t_{m-1})\,dt_{m-1}=
\int_\mathbb{R}\left(\sum_{n=0}^{2P}(-1)^n\frac{1}{m}u_{\mathrm{B},n}(x,t_1,t_2,\dots,t_K)^m\right)\phi(t_{m-1})\,dt_{m-1}
\end{equation}
holds uniformly for $(x,t_1,\dots,t_{m-2},t_{m},\dots,t_K)$ in compact subsets of $\mathbb{R}^K$.
Therefore $\tilde{D}_m$ converges to the alternating sum of $m$-th powers, weighted by $1/m$, of the branches of the multivalued solution of the Burgers hierarchy with initial condition $u_0$ in the topology 
of $\mathscr{D}'(\mathbb{R}_{t_{m-1}})$, that is, distributional convergence with respect to $t_{m-1}$.
\end{theorem}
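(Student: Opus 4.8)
The plan is to reduce both assertions to the uniform convergence $\tilde U\to V$ established in Proposition~1 together with the identification of the partial derivatives of $V$ carried out in Proposition~2, transferring the $x$- and $t_{m-1}$-derivatives that define the densities onto the test functions by integration by parts. The two ingredients that make this work are the representations $\tilde D_1=\partial_x\tilde U$ and $\tilde D_m=-\tfrac1m\partial_{t_{m-1}}\tilde U$, and the fact from Proposition~2 that the candidate limits are precisely $\partial_x V=\sum_{n=0}^{2P}(-1)^nu_{\mathrm{B},n}$ and $-\tfrac1m\partial_{t_{m-1}}V=\sum_{n=0}^{2P}(-1)^n\tfrac1m u_{\mathrm{B},n}^m$.

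I would begin with the distributional statement, which is the cleanest. For $\phi\in\mathscr D(\mathbb R)$ integrate by parts in $t_{m-1}$, the boundary terms vanishing since $\phi$ is compactly supported:
\[
\int_\mathbb R\tilde D_m\,\phi(t_{m-1})\,dt_{m-1}=-\frac1m\int_\mathbb R(\partial_{t_{m-1}}\tilde U)\phi\,dt_{m-1}=\frac1m\int_\mathbb R\tilde U\,\phi'(t_{m-1})\,dt_{m-1}.
\]
Because $\tilde U\to V$ uniformly on compact subsets of $\mathbb R^{K+1}$ and $\phi'$ is supported on a fixed compact $t_{m-1}$-interval, the right-hand side converges to $\tfrac1m\int V\phi'\,dt_{m-1}=-\tfrac1m\int(\partial_{t_{m-1}}V)\phi\,dt_{m-1}$, which Proposition~2 identifies with the asserted limit. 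Freezing $(x,t_1,\dots,t_{m-2},t_m,\dots,t_K)$ in a compact set and taking the union with the support of $\phi$ produces a compact subset of $\mathbb R^{K+1}$ on which the convergence $\tilde U\to V$ is uniform, whence the claimed uniformity.

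The weak-$L^2$ statement is handled the same way in the $x$-variable: for $v\in\mathscr D(\mathbb R)$,
\[
\int_\mathbb R\tilde D_1 v\,dx=-\int_\mathbb R\tilde U\,v'\,dx\longrightarrow-\int_\mathbb R V v'\,dx=\int_\mathbb R(\partial_x V)v\,dx,
\]
and Proposition~2 identifies $\partial_x V$ with $\sum_{n=0}^{2P}(-1)^nu_{\mathrm{B},n}$ (a function that lies in $L^2(\mathbb R)$, since outside a bounded $x$-set one has $P=0$ and the lone branch decays like $u_0$). To extend from $v\in\mathscr D(\mathbb R)$ to arbitrary $v\in L^2(\mathbb R)$ I would produce a bound on $\|\tilde D_1\|_{L^2}=\|\tilde u\|_{L^2}$ uniform in $\epsilon$ and in $(t_1,\dots,t_K)$. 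This is supplied by the hierarchy itself: the functional $\int u^2\,dx=2I_2[u]$ is conserved by every flow in \eqref{eq:Hierarchy}, so $\|\tilde u(\cdot,t_1,\dots,t_K)\|_{L^2}=\|\tilde u(\cdot,0,\dots,0)\|_{L^2}$ for all times, and the latter tends to $\|u_0\|_{L^2}$ by the $L^2$ approximation property of the modified initial data. A routine density argument --- approximate $v$ by $v_\delta\in\mathscr D(\mathbb R)$, bound the tail by this uniform $L^2$ estimate and handle the smooth part by the uniform convergence just proved --- then delivers the weak convergence for all $v\in L^2(\mathbb R)$, uniformly in $(t_1,\dots,t_K)$ on compacts.

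Finally, the strong-$L^2$ claim in the single-valued regime rests on the Hilbert-space fact that weak convergence plus convergence of norms implies strong convergence. When $u_\mathrm{B}$ is single-valued one has $P=0$, so the weak limit is the single branch $u_\mathrm{B}$, whose $L^2$ norm is itself conserved along each Burgers flow $\partial_{t_k}u_\mathrm{B}+\partial_x u_\mathrm{B}^{k+1}=0$; hence $\|u_\mathrm{B}(\cdot,t_1,\dots,t_K)\|_{L^2}=\|u_0\|_{L^2}=\lim_{\epsilon\downarrow0}\|\tilde u(\cdot,t_1,\dots,t_K)\|_{L^2}$, and strong convergence follows. I expect this norm matching to be the crux: it is available only before shock formation, because once $u_\mathrm{B}$ becomes multivalued the weak limit is the alternating sum $\sum_{n=0}^{2P}(-1)^nu_{\mathrm{B},n}$, whose $L^2$ norm is strictly smaller than $\lim_{\epsilon\downarrow0}\|\tilde u\|_{L^2}=\|u_0\|_{L^2}$. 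This strict deficit is precisely the oscillatory loss of strong compactness that forces the convergence to weaken to the weak-$L^2$ and distributional senses recorded in the statement.
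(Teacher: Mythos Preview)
Your proposal is correct and follows essentially the same route as the paper: the paper's proof is only a sketch, noting that the distributional statements follow immediately from Propositions~1 and~2 (via the integration-by-parts argument you spell out) and deferring the upgrade to weak and strong $L^2(\mathbb{R}_x)$ convergence to the arguments on pages~254--256 of \cite{MillerX10}, which are precisely the uniform $L^2$ bound from conservation of $I_2$, the density argument, and the weak-plus-norm-implies-strong trick you describe.
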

\begin{proof}
The distributional convergence of $\tilde{D}_m$ for $m\ge 2$ clearly follows from our above results.
It is also easy to conclude that \eqref{eq:L2weak} holds if $v$ is specialized to a test function $\phi\in\mathscr{D}(\mathbb{R})$.  To strengthen this to weak $L^2(\mathbb{R}_x)$ convergence and strong
$L^2(\mathbb{R}_x)$ convergence pre-breaking, one follows nearly verbatim the arguments on pages 254--256 of
\cite{MillerX10}.
\end{proof}

We expect that with some additional effort, the nature of the convergence of $\tilde{D}_m$ for $m\ge 2$
can be strengthened to exactly the same type as is available for $\tilde{D}_1=\tilde{u}$, a type of convergence that is more suitable for evaluation at a point in the phase space $\mathscr{P}$ of fields.
This expectation is based on the reasonable hypothesis that the weak (or distributional) nature of the convergence stems from the presence of wild oscillations that can be modeled by modulated $P$-phase wave exact solutions of the Benjamin-Ono hierarchy as have been described by Matsuno \cite{Matsuno} using
the bilinear method of Hirota.  These $P$-phase waves have also been obtained directly from the Lax pair for \eqref{eq:BO} (the $k=1$ case of the hierarchy only) by Dobrokhotov and Krichever \cite{DobrokhotovK91}, who further provided a formal Whitham-type modulation theory for these waves,
noting that the modulation equations simply take the form of $2P+1$ copies of the inviscid Burgers
equation (equation \eqref{eq:BurgersHierarchy} for $k=1$).  In light of our results, it appears that these $2P+1$ copies should be globally viewed as sheets of the same multivalued solution.  In any case,
if one interprets the distributional limits in $\mathscr{D}'(\mathbb{R}_{t_{m-1}})$ as local averages of $\tilde{D}_m$ over vanishingly small intervals of $t_{m-1}$, then assuming only that the wavenumbers and frequencies are not rationally dependent, these averages could just as well be calculated over
small intervals in $x$, holding $t_1,t_2,\dots,t_K$ fixed.  In other words, if the weak limits are necessary
due to the presence of modulated multiphase waves of wavelengths and periods proportional to
$\epsilon$, then there should at generic points be no difference between convergence in $\mathscr{D}'(\mathbb{R}_{t_{m-1}})$ and convergence in $\mathscr{D}'(\mathbb{R}_x)$.  A proof 
of such a result probably requires resolution of the microstructure as could be obtainable from
an approach to the zero-dispersion limit that starts with the nonlocal Riemann-Hilbert problem of inverse scattering for Benjamin-Ono,  and that involves the development of  some new analogue of the Deift-Zhou asymptotic method as has been applied \cite{DeiftVZ97} to strengthen the zero-dispersion limit of Korteweg-de Vries equation.  We hope to be able to announce progress in this direction in the near future.

It seems to us that while the Benjamin-Ono equation looks at first glance to be a more complicated model for wave propagation than the more famous Korteweg-de Vries equation due to the presence of the Hilbert transform and its concomitant nonlocality (and perhaps even at ``second glance'', since the
treatment of the Benjamin-Ono equation by the inverse-scattering transform method is far less well-understood than in the case of the Korteweg-de Vries equation), in fact it is far simpler in the zero-dispersion limit.  Indeed, the asymptotic formulae that are the analogues in the Korteweg-de Vries case
of our limiting formulae for $\tilde{D}_m$ require the solution  of a variational problem for a quadratic functional with constraints as was found by Dave Levermore and Peter Lax in their pioneering work \cite{LaxL83}, while for Benjamin-Ono it suffices to be able to solve the implicit algebraic equation \eqref{eq:Burgersimplicit} for $u_\mathrm{B}$, or alternatively to solve the system of partial differential equations \eqref{eq:BurgersHierarchy} numerically by the method of characteristics.  These
are far more elementary tasks.  We want to stress this point to hopefully encourage the use in the practical modeling of internal waves of the simple approximate formulae available for the Benjamin-Ono equation and its hierarchy when the dispersion parameter $\epsilon$ can be reasonably assumed to be small.

\end{document}